\newfont{\footsc}{cmcsc10 at 8truept}
\newfont{\footbf}{cmbx10 at 8truept}
\newfont{\footrm}{cmr10 at 10truept}
\newtheorem{theorem}{\bf Theorem}
\begin{document}
\title{D-optimal designs via a cocktail algorithm}

\author{Yaming Yu\\
\small Department of Statistics\\[-0.8ex]
\small University of California\\[-0.8ex] 
\small Irvine, CA 92697, USA\\[-0.8ex]
\small \texttt{yamingy@uci.edu}}

\date{}
\maketitle

\begin{abstract}
A fast new algorithm is proposed for numerical computation of (approximate) D-optimal designs.  This {\it cocktail algorithm} extends the well-known vertex direction method (VDM; Fedorov 1972) and the multiplicative algorithm (Silvey, Titterington and Torsney, 1978), and shares their simplicity and monotonic convergence properties.  Numerical examples show that the cocktail algorithm can lead to dramatically improved speed, sometimes by orders of magnitude, relative to either the multiplicative algorithm or the vertex exchange method (a variant of VDM).  Key to the improved speed is a new nearest neighbor exchange strategy, which acts locally and complements the global effect of the multiplicative algorithm.  Possible extensions to related problems such as nonparametric maximum likelihood estimation are mentioned. 

{\bf Keywords:} D-optimality; experimental design; hybrid algorithm. 
\end{abstract}

\section{Introduction}
This paper studies numerical methods for computing D-optimal designs (approximate theory; see Kiefer 1974, Pukelsheim 1993, and Atkinson, Donev and Tobias, 2007).  Given a parametric model, the problem is to find an allocation of weights to the design points $x_1, \ldots, x_n$ (which encode the explanatory variables at specific values) so that the determinant of the Fisher information matrix of the parameter is maximized.  We focus on the linear model and discuss possible extensions in Section~5.  Two strategies for this classical problem are the vertex direction method (VDM; see Fedorov 1972 and Wynn 1972) and the multiplicative algorithm (Silvey, Titterington and Torsney, 1978).  Both VDM and the multiplicative algorithm are simple iterative strategies that converge monotonically, i.e., the determinant criterion never decreases along the iterations.  Though easy to implement, VDM or the multiplicative algorithm can be slow, and various strategies have been devised to remedy this.  In particular, B\"{o}hning (1986) proposes the vertex exchange method (VEM) as a more effective variant of VDM.  Variants of the multiplicative algorithm are considered by, for example, Titterington (1978), Mandal and Torsney (2006), and Dette, Pepelyshev and Zhigljavsky (2008). 

In this work we propose a {\it cocktail algorithm} for efficient computation of D-optimal designs.  As the name suggests, this is based on a combination of several strategies, including VDM and the multiplicative algorithm.  A new ingredient that contributes significantly to its effectiveness, however, is a nearest neighbor exchange strategy, which is intended to complement the multiplicative algorithm.  Two desirable effects of nearest neighbor exchanges are i) elimination of multiple bad support points, and ii) quick apportionment between very similar support points.  Both compensate for the potentially slow convergence rate of the multiplicative algorithm, while the former also reduces its computing time per iteration.  Operationally, the nearest neighbor exchanges are as simple to implement as VDM, VEM, or the multiplicative algorithm.  The speedup brought in by such a simple modification, however, can be dramatic. 

In Section~2, after a brief review of the D-optimal design problem on finite design spaces, we describe the multiplicative algorithm, VDM, and VEM.  Then we introduce the nearest neighbor exchange strategy and formally define the cocktail algorithm.  Section~3 establishes that the cocktail algorithm is monotonically convergent.  Section~4 presents numerical illustrations with several regression models.  The cocktail algorithm compares favorably with the multiplicative algorithm, VEM, and general optimization methods such as conjugate gradient and quasi-Newton.  Section~5 concludes with a discussion on possible extensions. 

\section{Algorithms for D-optimal designs}
We focus on the important case of a finite design space $\mathcal{X}=\{x_1, \ldots, x_n\}\subset \mathbf{R}^m$, 
which may be the result of discretizing an underlying continuous space.  An {\it approximate design} (Kiefer, 1974) 
is any probability vector $w=(w_1,\ldots, w_n)\in \bar{\Omega}$, where $\bar{\Omega}$ denotes the closure of $\Omega=\{w:\ \sum_{i=1}^n w_i=1,\ w_i> 0\}$.  The value $w_i$ represents the proportion of units an experimenter assigns to $x_i$.  Approximate designs allow $w_i$ to be real numbers; some rounding is usually used to convert $w$ to a design with a finite sample size.  Suppose the response from a unit assigned to $x_i$ is modeled as 
$$y|(x_i,\theta)\sim {\rm N}(x_i^\top \theta, \sigma^2),$$
where $\theta$ ($m\times 1$) is the parameter of interest, and suppose responses from different units are independent.  Then the Fisher information matrix for $\theta$ is proportional to 
$$M(w)=\sum_{i=1}^n w_ix_ix_i^\top.$$
A design $w^*$ is D-optimal if it maximizes 
$$\phi(w)\equiv \log \det M(w),\quad w\in \bar{\Omega}.$$  
Equivalently, a D-optimal design minimizes the determinant of the variance matrix of the best linear unbiased estimator of $\theta$.  The D-criterion is among the most widely used optimal design criteria. 

We shall describe several iterative algorithms for finding D-optimal designs.  These differ in the choice of the starting value $w^{(0)}$ and the updating rule $w^{(t)}\to w^{(t+1)}$.  The following common convergence criterion,  however, will be used throughout.  Define 
$$d(i, j, w)\equiv x_i^\top M^{-1}(w) x_j,\quad d(i, w)\equiv d(i, i, w).$$
Note that $d(i, w)=\partial \phi(w)/\partial w_i $.  Alternatively, $d(i, w)-m$ is a directional derivative $\partial\phi((1-\delta) w+\delta e_i)/\partial \delta |_{\delta=0+}$ where the probability vector $e_i$ puts all the mass on $x_i$. 

{\bf Convergence criterion:}
\begin{equation}
\label{conv}
m^{-1}\max_{1\leq i\leq n} d\left(i, w^{(t)}\right) \leq 1+\epsilon,
\end{equation}
where $\epsilon$ is a small positive constant. 

This convergence criterion can be motivated from the general equivalence theorem (Kiefer and Wolfowitz, 1960), part of which states that $w$ is D-optimal if and only if 
$$m^{-1} \max_{1\leq i\leq n} d(i, w)=1.$$ 
The general equivalence theorem thus allows us to check whether a given weight allocation $w=(w_1,\ldots, w_n)$ is D-optimal; it is crucial to both analytic and numerical approaches to the problem. 

\subsection{The multiplicative algorithm}
The {\it multiplicative algorithm} (MA) refers to a well-known proposal of Silvey et al. (1978). 

{\bf Algorithm I (the multiplicative algorithm)}

{\it Starting value.} Choose $w^{(0)}\in \Omega$, i.e., $w_i^{(0)}>0$ for all $i$.  

{\it Updating rule.} 
\begin{equation}
\label{alg1}
w_i^{(t+1)}=w_i^{(t)} m^{-1} d\left(i, w^{(t)}\right),\quad i=1, \ldots, n.
\end{equation}

Let us denote the mapping (\ref{alg1}) as $w^{(t+1)}=MA(w^{(t)})$.  Equivalently, (\ref{alg1}) can be written as
\begin{equation}
\label{ma2}
w_i^{(t+1)}=w_i^{(t)}\frac{\partial \phi\left(w^{(t)}\right)/\partial w_i}{\sum_{j=1}^n w_j^{(t)} \partial 
\phi\left(w^{(t)}\right)/\partial w_j},
\end{equation}
which highlights $\sum_i w_i^{(t+1)}=1$, that is, $w^{(t+1)}$ is correctly normalized.  Heuristically, (\ref{ma2})  simply adjusts the weights $w$ so that proportionally more weight is put on $x_i$ if 
the gain in the objective function $\phi$ by a slight increase in $w_i$ (i.e., $\partial \phi(w)/\partial w_i$) is larger. 

Algorithm~I has generated considerable interest; see, for example, Titterington (1976, 1978), Silvey et al. (1978), Mandal and Torsney (2006), Harman and Pronzato (2007), and Dette et al.\ (2008).  The latter three papers are concerned with improving the multiplicative algorithm based on principles different from the exchange strategies reported here.  Mandal and Torsney (2006) consider applying a class of multiplicative algorithms to clusters of design points for better efficiency.  Harman and Pronzato (2007) study methods to exclude nonoptimal design points so that the dimension of the problem is reduced.  Dette et al.\ (2008) propose a modification of Algorithm~I which takes larger steps at each iteration but still maintains monotonic convergence (see also Yu 2010b).  Another relevant work is Yu (2010a), which formulates Algorithm~I as an iterative conditional minimization procedure and is mainly concerned with theoretical properties.  Algorithm~I is an important ingredient in our proposed cocktail algorithm. 

\subsection{VDM and VEM}
The vertex direction method (VDM) is defined by the following iteration $w\to w^{new}$. 

{\bf VDM:} 
Select $1\leq i_{max}\leq n$ such that 
\begin{equation}
\label{VDM}
d(i_{max}, w)=\max_{1\leq i\leq n} d(i, w),
\end{equation}
and set $w^{new}=VDM(w)$ as 
\begin{equation*}
w^{new}_i=\begin{cases} (1-\delta) w_i,& i\neq i_{max},\\
(1-\delta) w_i+\delta, & i=i_{max},
\end{cases}
\end{equation*}
where $\delta\in [0,1]$ is such that $\det M(w^{new})$ is maximized.  The maximizing $\delta$ is available in closed form: 
$$\delta=\frac{d\left(i_{max}, w\right)/m-1}{d\left(i_{max}, w\right)-1}.$$
See Fedorov (1972) for the underlying rationale.  Plainly, we move $w$ in the direction of a design point toward which the directional derivative of $\phi$ is the greatest.  VDM is a steepest ascent strategy in this sense.  With a slight abuse of notation, we shall occasionally write $w^{new}=VDM(i_{max}, w)$ instead of $w^{new}=VDM(w)$ to emphasize the index $i_{max}$. 

Closely related to VDM is a general exchange step $w\to w^{new}$ for any two design points $x_j, x_k,\ j\neq k$. 

{\bf VE(j, k):}
Set $w^{new}$ as 
\begin{equation*}
w^{new}_i =\begin{cases} w_i,& i\notin\{j, k\},\\
w_i-\delta, & i=j,\\
w_i+\delta, & i=k,
\end{cases}
\end{equation*}
where $\delta\in \left[-w_k, w_j\right]$ is chosen such that $\det M(w^{new})$ is maximized.  Following B\"{o}hning (1986), it can be shown that the maximizing $\delta$ is 
$$\delta=\min\left\{w_j,\, \max\{-w_k,\, \delta^*(j, k)\}\right\},$$
where
\begin{equation}
\label{dstar}
\delta^*(j, k)=\frac{d(k, w)-d(j, w)}{2(d(j, w)d(k, w)-d^2(j, k, w))}.
\end{equation}

Plainly, VE($j, k$) performs an optimal exchange of mass between $x_j$ and $x_k$.  The exchange is optimal in the sense of maximal increase in the determinant of the information matrix.  When $\delta=w_j$, all the mass assigned to $x_j$ (which has a smaller $d(j, w)$, indicating that it should carry less weight) is transferred to $x_k$; similarly when $\delta=-w_k$.  We shall denote this mapping $w\to w^{new}$ by $w^{new}=VE(j, k, w)$. 

{\bf Remark.} The denominator in (\ref{dstar}) is nonnegative by Cauchy-Schwarz.  It becomes zero only when one of $x_j, x_k$ is a constant multiple of the other, in which case we define $\delta^*(j, k)$ as $+\infty$ or $-\infty$ according as $d(k, w)>d(j, w)$ or $d(k, w)<d(j, w)$.  If $x_j+x_k=0$, then both the numerator and the denominator in (\ref{dstar}) become zero, and $\det M\left(w^{new}\right)$ is constant as a function of $\delta\in [-w_k, w_j]$; we set $\delta^*(j, k)$ as an arbitrary constant (say zero) in this case.  These contingencies rarely arise in practice. 

The {\it vertex exchange} method (VEM) of B\"{o}hning (1986) performs an optimal exchange between two special design points. 

{\bf Algorithm II (the vertex exchange method)} 

{\it Starting value.}
Choose $w^{(0)}\in \bar{\Omega}$ such that $\det M(w^{(0)})>0$. 

{\it Updating rule.}
Select $i_{min}$ and $i_{max}$ such that 
\begin{align}
\label{VEM1}
d\left(i_{min}, w^{(t)}\right)&=\min\left\{d(i, w^{(t)}):\ w^{(t)}_i>0\right\},\\
\label{VEM2}
d\left(i_{max}, w^{(t)}\right)&=\max_{1\leq i\leq n} d\left(i, w^{(t)}\right).
\end{align}
Set 
$$w^{(t+1)}=VE\left(i_{max}, i_{min}, w^{(t)}\right).$$ 

That is, VEM finds $i_{min}$ (resp.\ $i_{max}$) such that $d(i, w)$ is minimized (resp.\ maximized), and then performs an optimal transfer of mass from $x_{i_{min}}$ to $x_{i_{max}}$ (it is also required that $i_{min}$ have nonzero mass to supply to $i_{max}$).  Hence we may view VEM as a steepest ascent strategy in its choice of the two indices $i_{min}$ and $i_{max}$.  We shall numerically compare VEM with our proposed algorithm in Section~4. 

\subsection{Nearest neighbor exchanges}
A key ingredient in our proposed algorithm is a nearest neighbor exchange strategy, which can be motivated as follows.  Intuitively, Algorithm I (the multiplicative algorithm) may have difficulty apportioning the mass between adjacent design points.  Consider two design points $x_i$ and $x_j$ that are close together as measured by some distance metric in $\mathbf{R}^m$.  Then $d(i, w)\approx d(j, w)$, and according to (\ref{alg1}), we have
$$\frac{w_i^{(t+1)}}{w_j^{(t+1)}}=\frac{w_i^{(t)}}{w_j^{(t)}}\frac{d\left(i, w^{(t)}\right)}{d\left(j, w^{(t)}\right)} \approx \frac{w_i^{(t)}}{w_j^{(t)}}.$$
That is, the relative proportions between $x_i$ and $x_j$ barely change from iteration to iteration.  Another way of putting it is that, if $x_i$ is a support point of the optimal design, then it would take many iterations before Algorithm I can significantly reduce the mass on those $x_j$ which are adjacent to $x_i$ but are not support points. 

A simple remedy is to add nearest neighbor exchanges (NNEs) to Algorithm I.  NNEs are easy to define when there exists a natural ordering in the design space.  An example is 
\begin{equation}
\label{space1}
\mathcal{X}=\left\{x_i=(1,\, f(i/n))^\top,\ i=1,\ldots, n\right\},
\end{equation}
where $f$ is a continuous function on $[0,1]$ representing a single quantitative predictor.  In such a case $x_i$ and $x_j$ are close whenever $|i-j|$ is small.  Given the current iterate $w^{(t)}$, let $i_1<\cdots<i_{p+1}$ denote the indices of the support points of $w^{(t)}$.  We may consider performing 
vertex exchanges between $x_{i_j}$ and $x_{i_{j+1}}$ for $j=1,\ldots, p$ in turn, i.e., 
$$w^{(t+j/p)}=VE\left(i_j, i_{j+1}, w^{(t+(j-1)/p)}\right),\quad j=1,\ldots, p,$$
where fractional superscripts denote intermediate output.  We refer to the mapping $w^{(t)}\to w^{(t+1)},$
which consists of $p$ sub-steps, as the set of nearest neighbor exchanges.  Note that non-support points of $w^{(t)}$ are excluded, i.e., $x_{i_{j+1}}$ is a ``nearest neighbor'' of $x_{i_j}$ in the support of $w^{(t)}$ only. 

This intuitively appealing prescription depends on a natural ordering of $x_i$.  Sometimes there is no single natural ordering, e.g., when the design space encodes two or more factors.  Selecting an ordering that best captures the neighborhood structure is therefore an interesting problem.  In our numerical examples (Section~4), we explore another approach, which dynamically determines the nearest neighbors at each iteration.  Specifically, let $\|x_j-x_k\|$ denote the distance between design points $x_j$ and $x_k$, as measured by the $L_1$ norm.  The choice of the metric does not make much difference in our experience. 

{\bf NNE:} Let $i_1,\ldots,i_{p+1}$ be the elements of $\{i:\ w_i^{(t)}>0\}$ where $p+1$ is the number of support points of $w^{(t)}$.  For each $i_j,\ j=1,\ldots, p,$ let $i^*_j$ be any index $i\in \{i_{j+1},\ldots, i_{p+1}\}$ such that $\left\|x_i -x_{i_j}\right\|$ is minimized.  Perform vertex exchanges between $x_{i_j}$ and $x_{i_j^*}$ for $j=1,\ldots, p$ in turn, i.e., 
$$w^{(t+j/p)}=VE\left(i_j, i_j^*, w^{(t+(j-1)/p)}\right).$$

Again, non-support points of $w^{(t)}$ are excluded.  We shall denote the composite mapping $w^{(t)}\to w^{(t+1)}$ as $w^{(t+1)}=NNE(w^{(t)})$. 

{\bf Remark.} The index $i_j^*$ is defined as a minimizer of $\left\|x_i-x_{i_j}\right\|$ over $i\in \{i_{j+1},\ldots, i_{p+1}\}$, rather than over $i\in \{i_1, \ldots, i_{j-1}, i_{j+1}, \ldots, i_{p+1}\}$, to avoid possible redundancies.  If we adopt the latter definition, then for two points that are nearest neighbors of each other, we would have two  exchange steps in one iteration between these same points. 

\subsection{The cocktail algorithm}
NNE has a serious problem as a stand-alone algorithm.  By definition, we have $w_i^{(t+1)}=0$ once $w_i^{(t)}=0$, 
i.e., the point $x_i$ remains outside of the support set.  Algorithm I, which suffers from the same problem, circumvents it by assigning positive initial mass to each design point.  NNE may result in $w^{(t+1)}_i=0$ even if $w^{(t)}_i>0$.  The problem persists when we combine NNE and Algorithm I. 

An easy solution is to add in the updating rule of VDM.  By definition, a VDM step can put some mass on a design point that was assigned zero mass previously.  We define {\it the cocktail algorithm} as a combination of VDM, NNE, and Algorithm I. 

{\bf Algorithm III (the cocktail algorithm)}

{\it Starting value.}
Choose $w^{(0)}\in \bar{\Omega}$ such that $\det M(w^{(0)})>0$. 

{\it Updating rule.} Perform an iteration of VDM, the nearest neighbor exchanges, and then an iteration of Algorithm I.
That is, let 
\begin{equation}
\label{cocktail}
w^{(t+1/3)}=VDM(w^{(t)}),\quad w^{(t+2/3)}=NNE(w^{(t+1/3)}),\quad w^{(t+1)}=MA(w^{(t+2/3)}),
\end{equation}
where again fractional superscripts indicate intermediate output. 

An added benefit of (\ref{cocktail}) is that NNE helps keep the number of support points of $w^{(t+2/3)}$ small, so that each iteration of $w^{(t+1)}=MA(w^{(t+2/3)})$ costs little time, as we need not update the coordinates of $w^{(t+2/3)}$ that are zero at the multiplicative step. 

We may consider using a VEM step instead of the VDM step above.  The resulting algorithm is similarly effective (numerical comparison omitted), although the convergence proof of Section~3 does not seem to extend easily to this alternative cocktail algorithm. 

\section{Monotonic convergence}
Several algorithms considered here have an appealing monotonic convergence property, i.e., as $t\uparrow \infty,\ \det M(w^{(t)})$ increases to $\sup_{w\in \Omega} \det M(w)$.  Monotonic convergence of Algorithm I is well-established (see Titterington, 1976; P\'{a}zman, 1986; Dette et al., 2008; Yu, 2010a).  B\"{o}hning (1986) has given a proof of the monotonic convergence of Algorithm II, i.e., VEM.  Theoretical results concerning algorithms related to VEM and VDM can be found in Atwood (1976) and Wu (1978), for example. 

The monotonicity of Algorithm III is immediate since VDM, the nearest neighbor exchanges, and Algorithm I are all monotonic.  That Algorithm III converges is a consequence of this monotonicity and the global convergence nature of VDM. 

\begin{theorem}
\label{thm1} 
Assume the $n\times m$ matrix $X=(x_1, \ldots, x_n)^\top$ has full rank $m$.  Then Algorithm III converges monotonically starting from any $w^{(0)}\in \Omega_+$ where $\Omega_+=\{w\in \bar{\Omega}:\ \det M(w)>0\}$.
\end{theorem}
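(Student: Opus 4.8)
\emph{Proof plan.} The monotonicity half is immediate and is already observed in the text: none of the three substeps $VDM$, $NNE$, $MA$ can decrease $\phi$, so $\phi(w^{(t)})$ is nondecreasing, and in particular $\det M(w^{(t)})\ge \det M(w^{(0)})>0$ for all $t$, so the iterates never leave $\Omega_+$. The real content is convergence to the optimum, and my plan is to show that all the work is done by the single $VDM$ substep, with $NNE$ and $MA$ merely helping. Since $NNE$ and $MA$ are monotone,
\[
\phi\left(w^{(t+1)}\right)-\phi\left(w^{(t)}\right)\ \ge\ \phi\left(w^{(t+1/3)}\right)-\phi\left(w^{(t)}\right)=\phi\left(VDM\left(w^{(t)}\right)\right)-\phi\left(w^{(t)}\right),
\]
so the per-iteration gain dominates the gain of a pure $VDM$ step taken at $w^{(t)}$. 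Thus it suffices to control the $VDM$ gain, which is the precise sense in which the ``global convergence nature of $VDM$'' drives the result.

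First I would record that the iterates live in a compact set on which $M^{-1}$ is uniformly bounded: since $\det M(w^{(t)})\ge\det M(w^{(0)})>0$ while the entries of $M(w^{(t)})$ stay bounded (each $w^{(t)}$ is a probability vector and the $x_i$ are fixed), the eigenvalues of $M(w^{(t)})$ are bounded away from $0$ and $\infty$. Hence the $d(i,w^{(t)})$ are uniformly bounded and depend continuously on $w$. Because $\phi$ is bounded above on $\bar\Omega$, the monotone sequence $\phi(w^{(t)})$ converges to a finite limit, and telescoping gives $\sum_t [\phi(w^{(t+1)})-\phi(w^{(t)})]<\infty$.

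Next, using the closed-form $\delta$ of the $VDM$ step together with the matrix determinant lemma, the exact $VDM$ gain is a function of the largest directional derivative alone: writing $d=\max_{1\le i\le n} d(i,w^{(t)})$ and $\delta=(d/m-1)/(d-1)$, it equals $g(d)=(m-1)\log(1-\delta)+\log((1-\delta)+\delta d)$. A short calculation (or the envelope theorem) shows that $g$ is continuous and strictly increasing on $[m,\infty)$ with $g(m)=0$. Combining this with the displayed lower bound on the per-iteration gain and the summability above yields $g(\max_i d(i,w^{(t)}))\to 0$; since $\max_i d(i,w^{(t)})\ge m$ always (the weighted average satisfies $\sum_i w_i d(i,w)=\mathrm{tr}(I_m)=m$), the properties of $g$ force $\max_i d(i,w^{(t)})\to m$.

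Finally I would convert this into convergence of the criterion itself through the concavity of $\phi$, i.e.\ the inequality underlying the general equivalence theorem: for an optimizer $w^*$,
\[
\phi(w^*)-\phi\left(w^{(t)}\right)\le \nabla\phi\left(w^{(t)}\right)^\top\!\left(w^*-w^{(t)}\right)=\sum_i w_i^* d\left(i,w^{(t)}\right)-m\le \max_{1\le i\le n} d\left(i,w^{(t)}\right)-m,
\]
which tends to $0$, giving $\det M(w^{(t)})\uparrow\sup_{w}\det M(w)$. I expect the main obstacle to be the third step: one must verify that the $VDM$ gain is bounded below by a strictly positive, \emph{continuous} function of the duality gap $\max_i d-m$, not merely positive, because it is this quantitative separation (rather than bare monotonicity) that lets summability exclude $\max_i d(i,w^{(t)})$ from staying above $m$. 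The full-rank hypothesis on $X$ enters only to ensure $\Omega_+\neq\emptyset$ and $\sup_w\det M(w)>0$, so that the problem is nondegenerate and the bounds above are meaningful.
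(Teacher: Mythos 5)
Your proof is correct, but it takes a genuinely different route from the paper's. The paper gives a qualitative, Zangwill-style limit-point argument: it extracts a convergent subsequence $w^{(t_j)}\to w^*$, passes to a further subsequence on which the VDM index $i_{max}$ is a fixed $k$, uses continuity of the map $VDM(k,\cdot)$ on $\{w\in\Omega_+:\ d(k,w)\ge d(i,w)\ \forall i\}$ to conclude $\tilde w=VDM(k,w^*)$, and then invokes the fact that VDM strictly increases $\det M$ unless $d(k,w)=m$, so that the equality of the limits $\det M(w^*)=\det M(\tilde w)$ forces $d(k,w^*)=m$ and the equivalence theorem finishes. You instead make the argument quantitative: you compute the exact VDM gain $g\bigl(\max_i d(i,w^{(t)})\bigr)$ in closed form (your formula checks out, since $1-\delta=\tfrac{d(m-1)}{m(d-1)}$ and $(1-\delta)+\delta d=d/m$ give $g'(d)=\tfrac{d-m}{d(d-1)}>0$ for $d>m$), use monotonicity of NNE and MA plus boundedness of $\phi$ to get summability of the gains, conclude $\max_i d(i,w^{(t)})\to m$, and convert this to convergence of $\phi(w^{(t)})$ via the concavity bound $\phi(w^*)-\phi(w)\le\max_i d(i,w)-m$. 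Your version avoids subsequences and the continuity lemma entirely, yields convergence of the objective values directly (and could in principle be sharpened to a rate), at the cost of needing the explicit step-length formula and the duality-gap inequality; the paper's version is shorter given the equivalence theorem but only concludes that all limit points are optimal. Two small points to tidy in your write-up: the case $m=1$ (where $\delta=1$ and the factor $\log(1-\delta)$ appears with coefficient $m-1=0$) should be noted or excluded, and the paragraph establishing uniform bounds on $M^{-1}$ and on the $d(i,w^{(t)})$ is not actually used by the rest of your argument and can be dropped.
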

\begin{proof}
See the Appendix.
\end{proof}

\section{Numerical examples}
We illustrate the effectiveness of the cocktail algorithm by comparing it with Algorithms I and II for a few regression models.  VDM by itself is very slow and is excluded from the comparisons.  All algorithms are implemented in R, and the source code is available upon request from the author.  The main program contains fewer than 150 lines of code, showing that Algorithms I--III are indeed easy to implement.  We also consider general-purpose algorithms such as Nelder-Mead,  conjugate gradient (CG), and quasi-Newton (specifically, the Broyden-Fletcher-Goldfarb-Shanno, or BFGS method).  These are known to be powerful for solving various high-dimensional optimization problems.  However, they are not the most effective for the D-optimal design problem considered here. 

For Algorithms~I--III, both the number of iterations and the computer time (as measured by the R function system.time()) are reported.  An iteration of the cocktail algorithm is counted as one iteration each of VDM, NNE and MA.  It may seem that the  iteration count comparison would favor the cocktail algorithm unfairly.  Careful inspection, however, shows that the computing time per iteration for the cocktail algorithm is spent mainly by the VDM step, because the NNE and MA steps only work with design points that receive positive mass in the current iteration, and this set of support points is typically much fewer than $n$.  Consequently the computing costs per iteration are actually comparable for VEM (i.e., Algorithm II) and the cocktail algorithm.  At any rate, the reader is reminded to focus on the computing time comparisons. 

For VEM and the cocktail algorithm, the starting design $w^{(0)}$ is the uniform design over a set of approximately $2m$ randomly sampled support points.  This is intended to ensure that $\det M(w^{(0)})>0$ while keeping the number of support points small.  VEM tends to take more iterations if the initial design has more support points, since it can remove at most one bad support point per iteration.  It is observed that the cocktail algorithm is relatively insensitive to the initial number of support points.  The multiplicative algorithm is always started at the uniform design over all $n$ points, as it cannot afford to exclude any design point {\it a priori} (see, however, Harman and Pronzato 2007). 

We consider the design spaces
\begin{align*}
\mathcal{X}_1(n) &=\left\{x_i=(e^{-s_i},\, s_i e^{-s_i}, e^{-2s_i}, s_i e^{-2s_i})^\top,\ 1\leq i\leq n\right\},\\
\mathcal{X}_2(n) &=\left\{x_i=(1,\ s_i,\ s_i^2,\ s_i^3,\ s_i^4)^\top,\ 1\leq i\leq n\right\},\\
\mathcal{X}_3(n) &=\left\{x_i=(e^{-s_i},\, s_i e^{-s_i}, e^{-2s_i}, s_i e^{-2s_i}, e^{-3s_i}, s_i e^{-3s_i}, e^{-4s_i}, s_i e^{-4s_i})^\top,\ 1\leq i\leq n\right\}, 
\end{align*}
where $s_i=3i/n,\ i=1, \ldots, n$.  The space $\mathcal{X}_1(n)$ represents the linearization of a compartmental model (see, e.g., Atkinson et al. 1993, and Dette, Melas and Wong 2006) 
$$y|(s, \theta) \sim \theta_1 e^{-\theta_2 s} + \theta_3 e^{-\theta_4 s} +{\rm N}(0, \sigma^2)$$
at $\theta_2=1$ and $\theta_4=2$ (the underlying design variable is $s\in[0,3]$ on a grid of $n$ evenly spaced points).
The space $\mathcal{X}_3(n)$ is similar to $\mathcal{X}_1(n)$ but has a parameter of higher dimension.  We include polynomial regression as represented by $\mathcal{X}_2(n),$ although analytic results are well known in this case (see, e.g., Pukelsheim, 1993).  For $s_i=i/k,\ r_i=2i/k-1,\ i=1,\ldots, k,$ we also consider 
\begin{align*}
\mathcal{X}_4(k^2) &=\left\{x_{(i-1)k+j}=(1,\ r_i,\ r_i^2,\ s_j,\ r_i s_j)^\top,\ 1\leq i, j\leq k\right\}. 
\end{align*}
This last example represents a response surface with a nonlinear effect and an interaction. 

Each algorithm is stopped when either the convergence criterion (\ref{conv}) is met with $\epsilon =10^{-6}$, or the number of iterations exceeds 10000.  For large design spaces, some of the experiments are aborted because the algorithm under consideration takes too much time, especially compared with the cocktail algorithm.  We have also considered less stringent convergence criteria such as $\epsilon=10^{-5}$, and the results 
(omitted) are similar. 

As is evident from Tables 1--4, the cocktail algorithm is a substantial improvement over both the multiplicative algorithm and VEM.  Because of the random starting values, results for VEM and the cocktail algorithm vary from replication to replication; the qualitative comparison, however, remains the same.  The tables report the median computing time (iteration count) over three replications for VEM and the cocktail algorithm.  For $\mathcal{X}_1(n)$ and $\mathcal{X}_3(n)$, VEM is much faster than the multiplicative algorithm; the situation is less clear for $\mathcal{X}_2(n)$.  The cocktail algorithm improves upon the better of the two, often by large factors.  MA and VEM tend to take more iterations for larger $n$, i.e., when the design space becomes finer, although peculiar exceptions do exist (e.g., VEM in Table~2).  The cocktail algorithm seems insensitive to $n$ concerning the number of iterations, at least for the design spaces considered. 

We also consider Nelder-Mead, conjugate gradient (CG), and quasi-Newton algorithms, which are readily available via the R function optim().  Quasi-Newton here refers to the popular Broyden-Fletcher-Goldfarb-Shanno (BFGS) method, while  conjugate gradient uses Fletcher-Reeves updates.  These are tested on the same design spaces and compared with Algorithms~I--III.  To make the optimization problem unconstrained, we use the substitution $w_i=z_i^2/\sum_{j=1}^n z_j^2,\ i=1,\ldots, n,$ and operate on $z$ rather than $w$ (see Atkinson et al.\ 2007).  The starting value is $z_i\equiv 1$.  We use numerical derivatives for BFGS and conjugate gradient.  It should be noted that these general purpose algorithms are not guaranteed to find a global maximum.  In several cases, despite extensive tuning, we have been unable to obtain an output that satisfies our convergence criterion (\ref{conv}) with $\epsilon=10^{-6}$.  Nelder-Mead, for example, often stops at sub-optimal solutions; so do BFGS and conjugate gradient for $\mathcal{X}_3(n)$.  In other cases, and with moderate $n$, we record the computing time of BFGS and conjugate gradient in Tables~1, 2 and 4.  BFGS seems faster than conjugate gradient in these cases and is sometimes competitive with the better of VEM and MA (e.g., for $\mathcal{X}_2(20)$ or $\mathcal{X}_2(50)$).  However, it definitely takes more time than the cocktail algorithm.  We note that one must be cautious when making such quantitative comparisons between algorithms with very different structures, because details of implementation may affect the relative performance considerably.  Nevertheless, this limited experience makes us more confident in recommending the cocktail algorithm, which is simple and fast, and has a global convergence guarantee. 

\begin{table}
\caption{Computing time (in seconds) and number of iterations (in parentheses) for the multiplicative algorithm (MA), the vertex exchange method (VEM), and the cocktail algorithm, for design space $\mathcal{X}_1(n)$.  Also included is the computing time for conjugate gradient (CG) and quasi-Newton (BFGS) methods.}
\begin{center}
\begin{tabular}{l|rrrrr}
\hline                    & $n=20$       & $n=50$          & $n=100$       & $n=200$        & $n=500$       \\
\hline
CG                        & 14.5         & 111.3          & 1328.4         &                &               \\
BFGS                      & 8.82         & 39.8           & 293.4          &                &               \\
MA                        & 14.3 (4239)  & 63.7 (8015)    & 147+ (10000+)  & 307+ (10000+)  & 762+ (10000+) \\
VEM                       & 0.17 (58)    & 1.43 (241)     & 23.1 (2113)    & 206+ (10000+)  & 555+ (10000+) \\
cocktail                  & 0.07 (8)     & 0.11 (9)       & 0.25 (13)      & 0.36 (13)      & 0.96 (16)     \\
\hline 
\end{tabular}
\end{center}
\end{table}

\begin{table}
\caption{Computing time (in seconds) and number of iterations (in parentheses) for design space $\mathcal{X}_2(n)$.}
\begin{center}
\begin{tabular}{l|rrrr}
\hline                    & $n=20$      & $n=50$       & $n=100$     & $n=200$        \\
\hline
CG                        & 4.61        & 164.5        & 220.3       &                \\
BFGS                      & 1.83        & 14.7         & 116.1       &                \\
MA                        & 3.38 (947)  & 10.1 (1292)  & 76.3 (4105) & 427+ (10000+)  \\
VEM                       & 6.32 (1371) & 30.3 (4747)  & 4.04 (302)  & 252+ (10000+)  \\
cocktail                  & 0.31 (24)   & 0.65 (25)    & 0.21 (10)   & 0.63 (21)      \\
\hline 
\end{tabular}
\end{center}
\end{table}

\begin{table}
\caption{Computing time (in seconds) and number of iterations (in parentheses) for design space $\mathcal{X}_3(n)$.}
\begin{center}
\begin{tabular}{l|rrrr}
\hline                    & $n=20$      & $n=50$       & $n=100$       & $n=200$        \\
\hline
MA                        & 3.94 (609)  & 24.2 (2371)  & 44.7 (3016)   & 382+ (10000+)  \\
VEM                       & 0.80 (182)  & 10.7 (1291)  & 37.6 (3242)   & 127.2 (5324)  \\
cocktail                  & 0.72 (22)   & 1.56 (32)    & 1.34 (42)     & 1.21 (29)      \\
\hline 
\end{tabular}
\end{center}
\end{table}

\begin{table}
\caption{Computing time (in seconds) and number of iterations (in parentheses) for design space $\mathcal{X}_4(n)$.}
\begin{center}
\begin{tabular}{l|rrrr}
\hline                    & $n=20^2$      & $n=50^2$     & $n=100^2$   & $n=200^2$   \\
\hline
CG                        & 1545.9        &              &             &             \\
BFGS                      & 657.3         &              &             &             \\
MA                        & 25.2 (430)    & 993.8 (2302) &             &             \\
VEM                       & 8.01 (159)    & 195.8 (702)  & 94.6 (98)   &             \\
cocktail                  & 0.63 (13)     & 3.94 (14)    & 17.6 (14)   &  74.1 (16)  \\
\hline 
\end{tabular}
\end{center}
\end{table}

\section{Discussion}
Although we focus on D-optimal designs for linear models, the basic idea is not limited to 
either D-optimality or linear models.  The multiplicative algorithm can be more general and 
is known to be monotonic for a large class of optimality criteria (Silvey et al.\ 1978, Yu 2010a).
For vertex exchange strategies with optimality criteria other than D-optimality, we may not 
have a closed form solution for the maximizing step-length similar to (\ref{dstar}).  But such
one-dimensional maximization problems presumably can be handled by standard tools such as 
Newton's method.  The idea of nearest neighbor exchanges is generic.  Overall, although the 
implementation may not be as simple, there is no conceptual problem extending the cocktail 
algorithm to other optimality criteria or to nonlinear problems. 

The optimal design problem is closely related to several other statistical problems (Haines 1998) such as
mixture estimation (Lindsay 1983) and nonparametric estimation with censored data. 
There exists a large literature on efficient computation of the nonparametric MLE of the distribution function with censored data; see, for example, Wellner and Zhan (1997), Jongbloed (1998) and Wang (2008).  The cocktail algorithm can be extended to this case and is quite competitive; see 
Yu (2010c). 

\section*{Acknowledgments}
This work is partly supported by a CORCL special research grant from the University of California,
Irvine.  The author would like to thank Don Rubin, Xiao-Li Meng, and David van Dyk for introducing him to the field of statistical computing.  He is also grateful to Anatoly Zhigljavsky, Yong Wang, an associate editor, and two referees for their valuable comments. 

\appendix
\section*{Appendix: Proof of Theorem \ref{thm1}}
Let $w^{(t)}$ be a sequence generated by Algorithm III, and let $w^{(t_j)}$ be a convergent subsequence tending to some $w^*$.  Monotonicity and $w^{(0)}\in \Omega_+$ show that $w^{(t)}\in \Omega_+$ for all $t$.  Hence $w^*\in \Omega_+$.  Let $w^{(t+1/3)}$ be defined as in (\ref{cocktail}).  By passing through another subsequence if necessary, we may assume $w^{(t_j+1/3)}$ converges to some $\tilde{w}\in \Omega_+$.  Moreover, we may assume that the VDM steps
\begin{equation*}
w^{(t_j+1/3)}=VDM(w^{(t_j)})
\end{equation*}
are all performed with the same index $k=i_{max}$ as in (\ref{VDM}), since at least one of the $n$ indices will occur infinitely often. 

The mapping $w^{new}=VDM(k, w)$ is continuous on 
$$\{w\in \Omega_+:\ d(k, w)\geq d(i, w),\, 1\leq i\leq n\}.$$ 
By letting $j\to \infty$ in $w^{(t_j+1/3)}=VDM(w^{(t_j)}),$ we get $\tilde{w}=VDM(k, w^*).$  Because each step of VDM, NNE, or MA is monotonic, and $t_j+1\leq t_{j+1}$, we have 
$$\det M(w^{(t_j)})\leq \det M(w^{(t_j+1/3)}) \leq \det M(w^{(t_j+1)})\leq \det M(w^{(t_{j+1})}).$$
Letting $j\to\infty$ yields $\det M(w^*)=\det M(\tilde{w})$.  However, inspection shows that the mapping $VDM(k, w)$ strictly increases $\det M(w)$, unless $d(k, w)=m$, in which case $w=VDM(k, w)$.  Hence $\tilde{w}=w^*$ and $d(k, w^*)=m$. Since $k=i_{max}$, the general equivalence theorem implies that $w^*$ is a global maximizer of $\det M(w)$ on $\Omega_+$.  That is, all limit points of $w^{(t)}$ are D-optimal.

\end{document}